\newtheorem{theorem}{Theorem}
\newtheorem{lemma}{Lemma}
\newtheorem{corollary}{Corollary}
\newtheorem{remark}{Remark}
\newtheorem{definition}{Definition}
\newcommand{\davg}[2]{\bar{\bar{#1}}^{#2}}
\newcommand{\dcavg}[2]{\tilde{\tilde{#1}}^{#2}}
\title{Operator Analysis of MACD}
\author{Yuelong Li}
\date{March 2020}
\begin{document}
\setlength{\parindent}{0pt}
\setlength{\parskip}{1ex}

\maketitle
\begin{abstract}
This paper provides a rigorous functional-analytic treatment of the MACD (Moving Average Convergence Divergence) indicator \cite{appel1979, murphy1999}, a classical tool in technical analysis. We demonstrate that MACD, typically viewed as the difference of two moving averages, admits a precise interpretation as a \textit{phase-corrected, smoothed derivative operator} \cite{oppenheim2009, burrus1998}.

By analyzing nested and recursive moving averages, we reveal that MACD is structurally equivalent to a band-pass filter and derive exact formulas expressing MACD as a finite difference of delayed and doubly-averaged signals. We prove new operator identities showing that MACD corresponds to the derivative of a phase-centered, double-smoothed average, delayed to correct for asymmetry introduced by causal averages. 

These results unify MACD with tools from harmonic and functional analysis, and provide a principled basis for understanding its behavior in signal detection, filtering, and trend analysis. The framework naturally generalizes to recursive decompositions, culminating in an expansion that expresses MACD as a weighted sum of delayed, smoothed derivatives — revealing the true analytical structure of this widely used but often heuristically applied indicator.
\end{abstract}

\section*{Introduction}

The Moving Average Convergence Divergence (MACD) indicator is a widely used tool in technical analysis \cite{murphy1999, pring2002}, popular for detecting momentum shifts and trend reversals in time series data \cite{appel1979}. Traditionally defined as the difference between two moving averages of different window lengths, MACD has gained popularity for its empirical effectiveness \cite{fama1992} — yet its analytical structure remains underexplored.

In this paper, we revisit MACD not as a heuristic rule, but as a precise mathematical operator. By treating moving averages as integral and convolutional operators, we show that MACD can be interpreted as a derivative — specifically, a centered, smoothed, and delayed derivative of the signal. This interpretation sheds new light on its behavior, particularly in noisy or nonstationary signals.

The key contributions of this work are:

\begin{itemize}
  \item We define phase-corrected and double-averaged smoothing operators to rigorously account for lag and alignment in moving averages.
  \item We prove a core identity showing that:
  \[
  \bar{P}^a(x) - \bar{P}^{2a}(x) = \frac{a}{2} \cdot \frac{d}{dx} \tilde{\tilde{P}}^a(x - a),
  \]
  thereby interpreting MACD as a derivative of a doubly phase-corrected smoothing operator.
  \item We develop a recursive expansion for nested averages, culminating in an identity:
  \[
  \bar{P}^{a}(x) - \bar{P}^{a+b}(x)
= \frac{b}{2} \frac{d}{dx} \sum_{i=1}^{n} \frac{2i}{n(n+1)} \dcavg{P}{b}(x - ib)\]
  where \( a = nb \), revealing a deeper structure in multi-scale averaging.
  \item We interpret this structure as a band-pass filter applied to a smooth approximation of the input signal, thus connecting MACD to classical tools from harmonic analysis and signal processing.
\end{itemize}

This unified perspective clarifies how MACD balances smoothing and sensitivity to local change, explains its effectiveness in trend detection, and offers a foundation for principled generalizations.

\section{Averaging Operators and Recursive Composition}

We begin by defining a standard moving average operator over a time window \( T \):

\begin{definition}[Right-endpoint moving average]
Let \( f: \mathbb{R} \to \mathbb{R} \) be a locally integrable function. The \textit{right-endpoint average} of \( f \) over an interval of length \( T > 0 \) is defined as:
\[
\bar{f}^T(x) := (R_T f)(x) :=  \frac{1}{T} \int_{x - T}^x f(t) \, dt.
\]

\end{definition}

This operator acts as a smoothing kernel over the last \( T \) units of time up to position \( x \).

\begin{theorem}[Boundedness of MACD Operator in $L^p$]
Let $P \in L^p(\mathbb{R})$ for $1 \le p \le \infty$. 
Define the MACD operator:
\[
M_a[P](x) = \bar{P}^a(x) - \bar{P}^{2a}(x),
\]
where $\bar{P}^a = P * K_a$ and $K_a = \frac1a \chi_{[0,a]}$.
Then $M_a$ is bounded on $L^p$ and satisfies
\[
\| M_a[P] \|_{L^p} \le 2 \| P \|_{L^p}.
\]
\end{theorem}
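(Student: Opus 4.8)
The plan is to prove boundedness by controlling each of the two averaging operators separately and then invoking the triangle inequality. The key observation is that $M_a[P] = P * K_a - P * K_{2a} = P * (K_a - K_{2a})$, so $M_a$ is itself a convolution operator. I could therefore try to bound $M_a$ directly via Young's convolution inequality, but the cleaner route is to bound each term $\bar{P}^a = P * K_a$ on its own.

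First I would establish that each right-endpoint average is a contraction on $L^p$, i.e. $\|P * K_a\|_{L^p} \le \|P\|_{L^p}$ for every $1 \le p \le \infty$. This is the heart of the argument and follows from Young's inequality $\|P * K_a\|_{L^p} \le \|P\|_{L^p}\,\|K_a\|_{L^1}$ together with the normalization $\|K_a\|_{L^1} = \frac{1}{a}\int_0^a 1\,dt = 1$. Since the kernel $K_a = \frac{1}{a}\chi_{[0,a]}$ is a nonnegative function of total mass one, this says each averaging operator has operator norm exactly $1$ on every $L^p$; for $p = \infty$ the same bound is immediate since an average of values bounded by $\|P\|_{L^\infty}$ is itself bounded by $\|P\|_{L^\infty}$.

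Second, I would combine the two contraction bounds using the triangle inequality in $L^p$:
\[
\|M_a[P]\|_{L^p} = \|\bar{P}^a - \bar{P}^{2a}\|_{L^p} \le \|\bar{P}^a\|_{L^p} + \|\bar{P}^{2a}\|_{L^p} \le \|P\|_{L^p} + \|P\|_{L^p} = 2\|P\|_{L^p},
\]
which is exactly the claimed estimate. Boundedness of $M_a$ on $L^p$ follows immediately since it is the difference of two bounded operators.

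I do not anticipate a serious obstacle here, as the result is a soft consequence of Young's inequality and the $L^1$-normalization of the kernels; the main thing to be careful about is stating Young's inequality in the correct form and verifying the endpoint case $p = \infty$ (and, if one wants full rigor, $p = 1$) where the general $L^1 * L^p \to L^p$ statement still applies. One could sharpen the constant by instead computing $\|K_a - K_{2a}\|_{L^1}$ directly and applying Young's inequality to $M_a = P * (K_a - K_{2a})$ in one step; since the supports overlap, the two kernels partially cancel on $[0,a]$, so the true $L^1$ norm of the combined kernel is strictly less than $2$, giving a tighter bound than the triangle-inequality estimate. However, for the stated theorem the constant $2$ suffices, so I would present the simple two-term argument and perhaps remark on the sharper constant afterward.
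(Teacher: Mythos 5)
Your proposal is correct and follows essentially the same approach as the paper: bound each averaging operator by $\|P\|_{L^p}$ via Young's inequality with $\|K_a\|_{L^1}=1$, then apply the triangle inequality to obtain the constant $2$. Your additional remark about sharpening the constant by computing $\|K_a - K_{2a}\|_{L^1}$ directly is a nice observation, but the core argument coincides with the paper's.
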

\begin{proof}
Since $\bar{P}^a = P * K_a$ and $\|K_a\|_{L^1} = 1$, Young's inequality gives
\[
\|\bar{P}^a\|_{L^p} \le \|P\|_{L^p}, \qquad
\|\bar{P}^{2a}\|_{L^p} \le \|P\|_{L^p}.
\]
By the triangle inequality:
\[
\|M_a[P]\|_{L^p} \le \|\bar{P}^a\|_{L^p} + \|\bar{P}^{2a}\|_{L^p}
\le 2 \|P\|_{L^p}.
\]
\end{proof}

\section{Recursive Structure of Averaging Operators}

Let us suppose we want to compute \( \bar{P}^T(x) \) using two shorter intervals \( t_1 \) and \( t_2 \), such that:
\[
T = t_1 + t_2.
\]

\begin{theorem}[Recursive decomposition of right-endpoint moving average]
Let \( T = t_1 + t_2 \). Then the right-endpoint average satisfies:
\[
\bar{P}^T(x) = \frac{t_1}{T} \bar{P}^{t_1}(x) + \frac{t_2}{T} \bar{P}^{t_2}(x - t_1),
\]
where
\[
\bar{P}^t(x) := \frac{1}{t} \int_{x - t}^{x} P(s) \, ds.
\]
\end{theorem}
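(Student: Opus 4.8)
The plan is to prove this by direct computation from the definition of the right-endpoint average, splitting the integral that defines $\bar{P}^T(x)$ over the interval $[x-T, x]$ at an intermediate point. Since $T = t_1 + t_2$, the integration interval $[x-T, x] = [x - t_1 - t_2, x]$ splits naturally at the point $x - t_1$ into two subintervals $[x - t_1 - t_2, x - t_1]$ and $[x - t_1, x]$, whose lengths are exactly $t_2$ and $t_1$ respectively.

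First I would write
\[
\bar{P}^T(x) = \frac{1}{T} \int_{x-T}^{x} P(s)\,ds = \frac{1}{T}\left( \int_{x - t_1}^{x} P(s)\,ds + \int_{x - t_1 - t_2}^{x - t_1} P(s)\,ds \right).
\]
Next I would recognize the first integral as $t_1 \bar{P}^{t_1}(x)$, since $\bar{P}^{t_1}(x) = \frac{1}{t_1}\int_{x-t_1}^{x} P(s)\,ds$. For the second integral I would identify it with a right-endpoint average of window $t_2$ evaluated at the shifted point $x - t_1$: indeed $\bar{P}^{t_2}(x - t_1) = \frac{1}{t_2}\int_{(x - t_1) - t_2}^{x - t_1} P(s)\,ds = \frac{1}{t_2}\int_{x - t_1 - t_2}^{x - t_1} P(s)\,ds$, so the second integral equals $t_2 \bar{P}^{t_2}(x - t_1)$.

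Substituting both identifications gives
\[
\bar{P}^T(x) = \frac{1}{T}\left( t_1 \bar{P}^{t_1}(x) + t_2 \bar{P}^{t_2}(x - t_1) \right) = \frac{t_1}{T}\bar{P}^{t_1}(x) + \frac{t_2}{T}\bar{P}^{t_2}(x - t_1),
\]
which is the claimed identity.

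I do not anticipate a genuine obstacle here, as the result follows from the additivity of the integral over adjacent intervals together with careful bookkeeping of the shifted endpoints. The only point requiring mild attention is the asymmetry of the decomposition: the first subaverage is centered at $x$ while the second is centered at the shifted point $x - t_1$, reflecting the causal (right-endpoint) nature of the operator. The splitting point must be chosen as $x - t_1$ rather than $x - t_2$ to make the window lengths and shifts line up with the stated formula; had one split at $x - t_2$ instead, the roles of $t_1$ and $t_2$ would be interchanged, yielding the symmetric variant $\bar{P}^T(x) = \frac{t_2}{T}\bar{P}^{t_2}(x) + \frac{t_1}{T}\bar{P}^{t_1}(x - t_2)$. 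I would note this alternative form as a remark, since it confirms the consistency of the decomposition under relabeling.
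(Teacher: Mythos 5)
Your proof is correct and takes essentially the same approach as the paper: split the defining integral for $\bar{P}^T(x)$ at $x - t_1$ and identify the two pieces as $t_1\bar{P}^{t_1}(x)$ and $t_2\bar{P}^{t_2}(x-t_1)$. In fact your bookkeeping is the cleaner of the two — the paper's displayed formulas write the split point as $x - t_2$ (an apparent typo, since its stated identifications only hold for the split at $x - t_1$ that its prose announces), whereas your endpoints and window lengths are consistent throughout.
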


\begin{proof}
By definition,
\[
\bar{P}^T(x) = \frac{1}{T} \int_{x - T}^{x} P(s) \, ds.
\]

Split the integral at \( x - t_1 \) (note that \( t_2 = T - t_1 \)):
\[
\int_{x - T}^{x} P(s) \, ds
= \int_{x - T}^{x - t_2} P(s) \, ds + \int_{x - t_2}^{x} P(s) \, ds.
\]

Observe that:
\[
\int_{x - T}^{x - t_2} P(s) \, ds = t_2 \cdot \bar{P}^{t_2}(x - t_1),
\quad
\int_{x - t_2}^{x} P(s) \, ds = t_1 \cdot \bar{P}^{t_1}(x).
\]

Therefore:
\[
\bar{P}^T(x) = \frac{1}{T} \left( t_1 \bar{P}^{t_1}(x) + t_2 \bar{P}^{t_2}(x - t_1) \right),
\]
as required.
\end{proof}

\begin{corollary}[Monotonicity in Averaging Window]
Let \( a, b > 0 \) with \( b > a \), and suppose:
\[
\bar{P}^a(x) > \bar{P}^b(x),
\]
where \( \bar{P}^t(x) := \frac{1}{t} \int_{x - t}^{x} P(s) \, ds \). Then it follows that:
\[
\bar{P}^a(x) > \bar{P}^{b - a}(x - a).
\]
\end{corollary}

\begin{proof}
From Theorem 1, we know that:
\[
\bar{P}^{b}(x) = \frac{a}{b} \bar{P}^{a}(x) + \frac{b - a}{b} \bar{P}^{b - a}(x - a).
\]

Assume, for contradiction, that \( \bar{P}^a(x) \leq \bar{P}^{b - a}(x - a) \). Then:
\[
\bar{P}^b(x) \geq \bar{P}^a(x),
\]
because a convex combination of \( \bar{P}^a(x) \) and a larger (or equal) value yields a value at least as large as \( \bar{P}^a(x) \).

This contradicts the assumption that \( \bar{P}^a(x) > \bar{P}^b(x) \). Therefore, it must be that:
\[
\bar{P}^a(x) > \bar{P}^{b - a}(x - a).
\]
\end{proof}

\begin{corollary}[Equality Case]
Let \( a, b > 0 \) with \( b > a \), and suppose:
\[
\bar{P}^a(x) = \bar{P}^b(x),
\]
where \( \bar{P}^t(x) := \frac{1}{t} \int_{x - t}^{x} P(s) \, ds \). Then it follows that:
\[
\bar{P}^a(x) = \bar{P}^{b - a}(x - a).
\]
\end{corollary}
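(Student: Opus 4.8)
The plan is to reuse the recursive decomposition established earlier, exactly as in the preceding monotonicity corollary, but now exploit it as an exact algebraic identity rather than an inequality. First I would invoke the recursive decomposition theorem with the splitting $b = a + (b-a)$, which yields
\[
\bar{P}^b(x) = \frac{a}{b}\,\bar{P}^a(x) + \frac{b-a}{b}\,\bar{P}^{b-a}(x-a).
\]
This expresses $\bar{P}^b(x)$ as a convex combination (the weights $a/b$ and $(b-a)/b$ are positive and sum to $1$, since $0 < a < b$) of the two quantities $\bar{P}^a(x)$ and $\bar{P}^{b-a}(x-a)$.

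The key step is then purely algebraic. Writing $\alpha := \bar{P}^a(x)$ and $\gamma := \bar{P}^{b-a}(x-a)$, the identity reads $\bar{P}^b(x) = \tfrac{a}{b}\alpha + \tfrac{b-a}{b}\gamma$. Substituting the hypothesis $\bar{P}^b(x) = \alpha$ and clearing the denominator $b$ gives $b\alpha = a\alpha + (b-a)\gamma$, hence $(b-a)\alpha = (b-a)\gamma$. Since $b > a$, the factor $b - a$ is strictly positive and may be cancelled, yielding $\alpha = \gamma$, which is precisely the desired conclusion $\bar{P}^a(x) = \bar{P}^{b-a}(x-a)$.

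I expect no genuine obstacle here: the only point requiring care is the cancellation of $b-a$, which is legitimate exactly because the hypothesis $b > a$ guarantees $b - a \neq 0$ (if $a = b$ the two endpoints would coincide and the statement would be vacuous). As an alternative, one could obtain the same result indirectly from the monotonicity corollary by arguing that a strict inequality $\alpha > \gamma$ would force $\bar{P}^b(x) < \alpha$ and $\alpha < \gamma$ would force $\bar{P}^b(x) > \alpha$, each contradicting the equality hypothesis; but the direct substitution is shorter and cleaner, so that is the route I would take.
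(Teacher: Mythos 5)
Your proof is correct, and it rests on exactly the key identity the paper intends: the recursive decomposition $\bar{P}^b(x) = \frac{a}{b}\,\bar{P}^a(x) + \frac{b-a}{b}\,\bar{P}^{b-a}(x-a)$, which is also the backbone of the preceding monotonicity corollary. The paper in fact states this equality-case corollary \emph{without} proof, so your argument fills that gap; your direct substitute-and-cancel route (using $b-a>0$ to cancel) is shorter and cleaner than transplanting the contradiction argument the paper uses for the strict-inequality case, and your closing remark correctly identifies the only delicate point.
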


\begin{remark}[Interpretation of Moving Average Differences]

Let \( a, b > 0 \), and suppose we compare \( \bar{P}^{a}(x) \) and \( \bar{P}^{b + a}(x) \), both anchored at the same point \( x \). Using corollary, we can deduce the following:

\begin{itemize}
    \item If \( \bar{P}^{a}(x) > \bar{P}^{b + a}(x) \), then it must be that \( \bar{P}^{a}(x) > \bar{P}^{b}(x - a) \), implying that \( P(x) \) is \textbf{locally increasing}.
    
    \item If \( \bar{P}^{a}(x) < \bar{P}^{b + a}(x) \), then \( \bar{P}^{a}(x) < \bar{P}^{b}(x - a) \), indicating \( P(x) \) is \textbf{locally decreasing}.
    
    \item If \( \bar{P}^{a}(x) = \bar{P}^{b + a}(x) \), then \( \bar{P}^{a}(x) = \bar{P}^{b}(x - a) \), suggesting \( P(x) \) is \textbf{locally linear} (or symmetric) across the interval \( [x - a - b, x] \).
\end{itemize}

This formulation provides a precise way to assess local monotonicity by comparing moving averages over nested intervals that end at the same point. This interpretation aligns naturally with the idea that comparing moving averages of different spans can serve as a proxy for estimating the \textbf{direction and intensity of local trends} — a core principle in momentum-based indicators like MACD.
\end{remark}

\section{MACD as a Derivative of Smoothed Signals}
\subsection{Difference Identity and Recursive Relation}

We begin by establishing a precise identity that relates moving averages of different lengths:

\begin{lemma}[Recursive Difference Identity]
Let \( a, b > 0 \). Then the following identity holds:
\[
\bar{P}^a(x) - \bar{P}^{a + b}(x)
= \frac{b}{a + b} \left( \bar{P}^a(x) - \bar{P}^b(x - a) \right),
\]
where \( \bar{P}^t(x) := \frac{1}{t} \int_{x - t}^{x} P(s) \, ds \).
\end{lemma}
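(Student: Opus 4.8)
The plan is to apply the recursive decomposition from Theorem 1 directly, since the lemma's identity is essentially an algebraic rearrangement of that result. Setting $T = a + b$, $t_1 = a$, and $t_2 = b$ in Theorem 1 gives
\[
\bar{P}^{a+b}(x) = \frac{a}{a+b}\,\bar{P}^a(x) + \frac{b}{a+b}\,\bar{P}^b(x - a).
\]
This is the natural starting point because it already expresses the longer average as a convex combination of the two shorter pieces, which is exactly the structure the claimed difference identity must encode.

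Next I would subtract this expression from $\bar{P}^a(x)$ and simplify. Writing $\bar{P}^a(x) = \frac{a+b}{a+b}\,\bar{P}^a(x) = \left(\frac{a}{a+b} + \frac{b}{a+b}\right)\bar{P}^a(x)$ lets me combine the $\bar{P}^a(x)$ terms cleanly:
\[
\bar{P}^a(x) - \bar{P}^{a+b}(x)
= \frac{b}{a+b}\,\bar{P}^a(x) - \frac{b}{a+b}\,\bar{P}^b(x - a).
\]
Factoring out $\frac{b}{a+b}$ then yields exactly $\frac{b}{a+b}\bigl(\bar{P}^a(x) - \bar{P}^b(x-a)\bigr)$, which is the desired identity.

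The whole argument is a one-line substitution followed by elementary algebra, so I do not anticipate any genuine obstacle; the only thing to watch is the bookkeeping of which shorter window gets the shift by $a$. Theorem 1 places the delay on the \emph{second} sub-interval (the one of length $t_2 = b$, evaluated at $x - t_1 = x - a$), so I must make sure the roles of $a$ and $b$ are assigned consistently — namely that the leading (rightmost, undelayed) block has length $a$ and the trailing block has length $b$. As an alternative sanity check, one could instead prove the identity from scratch by expanding both $\bar{P}^a(x)$ and $\bar{P}^{a+b}(x)$ as integrals, splitting $\int_{x-a-b}^{x}$ at $x-a$, and verifying the coefficients match; but invoking Theorem 1 is cleaner and I would present that version.
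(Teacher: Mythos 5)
Your proposal is correct and follows essentially the same route as the paper's own proof: both invoke Theorem 1 with \( T = a+b \), \( t_1 = a \), \( t_2 = b \) to write \( \bar{P}^{a+b}(x) = \frac{a}{a+b}\bar{P}^a(x) + \frac{b}{a+b}\bar{P}^b(x-a) \), then subtract and factor out \( \frac{b}{a+b} \). Your attention to placing the delay on the length-\( b \) block is exactly the right bookkeeping point, and no further justification is needed.
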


\begin{proof}
From the recursive averaging identity (Theorem 1), we have:
\[
\bar{P}^{a + b}(x) = \frac{a}{a + b} \bar{P}^a(x) + \frac{b}{a + b} \bar{P}^b(x - a).
\]
Rewriting:
\[
\bar{P}^a(x) - \bar{P}^{a + b}(x)
= \bar{P}^a(x) - \left( \frac{a}{a + b} \bar{P}^a(x) + \frac{b}{a + b} \bar{P}^b(x - a) \right),
\]
\[
= \left( 1 - \frac{a}{a + b} \right) \bar{P}^a(x) - \frac{b}{a + b} \bar{P}^b(x - a)
= \frac{b}{a + b} \left( \bar{P}^a(x) - \bar{P}^b(x - a) \right).
\]
\end{proof}

\begin{remark}
This identity refines the inequality-based corollary from earlier: when \( \bar{P}^a(x) > \bar{P}^b(x - a) \), the left-hand side is positive, quantifying the difference. It also sets up an exact algebraic path to interpreting MACD as a smoothed derivative.
\end{remark}

\subsection{Definition: Double Averaging}

\begin{definition}[Double Averaging Operator]
Let \( \bar{P}^a(x) \) denote the right-endpoint average of \( P(x) \). The \textbf{double average} of \( P \) over window \( a \) is defined as:
\[
\bar{P}^{a,a}(x) := R_a \left[ \bar{P}^a(x) \right]
= \frac{1}{a} \int_{x - a}^{x} \bar{P}^a(s) \, ds.
\]
\end{definition}

\subsection{Main Result: MACD as a Derivative}

\begin{theorem}[MACD as Derivative of Double Average]
Let \( P(x) \in C^0(I) \), and consider the difference:
\[
\bar{P}^a(x) - \bar{P}^{2a}(x).
\]
Then the following identity holds:
\[
\bar{P}^a(x) - \bar{P}^{2a}(x)
= \frac{a}{2} \frac{d}{dx} \bar{P}^{a,a}(x).
\]
\end{theorem}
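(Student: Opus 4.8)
The plan is to show that both sides reduce to the same finite-difference expression $\tfrac{1}{2}\left(\bar P^a(x) - \bar P^a(x-a)\right)$. The right-hand side is the easier one to handle directly. Since $P \in C^0(I)$, the single average $\bar P^a(s) = \tfrac1a\int_{s-a}^s P(t)\,dt$ is continuous in $s$, so the outer integral defining $\bar P^{a,a}(x) = \tfrac1a\int_{x-a}^x \bar P^a(s)\,ds$ has a continuous integrand and the fundamental theorem of calculus applies. Differentiating with respect to the two variable endpoints gives
\[
\frac{d}{dx}\bar P^{a,a}(x) = \frac{1}{a}\left(\bar P^a(x) - \bar P^a(x-a)\right),
\]
and multiplying by $a/2$ yields exactly $\tfrac12\left(\bar P^a(x)-\bar P^a(x-a)\right)$.

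For the left-hand side, rather than expand the integrals by hand, I would invoke the Recursive Difference Identity (the Lemma) with the special choice $b = a$. Substituting $b=a$ into
\[
\bar P^a(x) - \bar P^{a+b}(x) = \frac{b}{a+b}\left(\bar P^a(x) - \bar P^b(x-a)\right)
\]
collapses the prefactor $\tfrac{b}{a+b}$ to $\tfrac12$ and the term $\bar P^b(x-a)$ to $\bar P^a(x-a)$, giving
\[
\bar P^a(x) - \bar P^{2a}(x) = \frac12\left(\bar P^a(x) - \bar P^a(x-a)\right).
\]
Comparing this with the expression obtained for the right-hand side completes the argument, since the two coincide identically.

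The only subtlety, and the step I would be most careful about, is the justification of the differentiation that produces the finite difference: the fundamental theorem of calculus requires the integrand $\bar P^a$ to be continuous, which is precisely why the hypothesis $P\in C^0(I)$ is invoked. Convolution against the indicator kernel $K_a$ improves regularity by one degree, so $\bar P^a$ is continuous (indeed $C^1$, with $\tfrac{d}{ds}\bar P^a(s) = \tfrac1a\left(P(s)-P(s-a)\right)$), and hence $\tfrac{d}{dx}\bar P^{a,a}$ exists everywhere on the interior of $I$ rather than merely almost everywhere. With that regularity secured the computation is immediate, so the real content of the theorem is the algebraic coincidence already packaged in the Lemma; the differential form is essentially free once the finite-difference representation is in hand.
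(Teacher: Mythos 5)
Your proof is correct and takes essentially the same route as the paper: both sides are reduced to the common finite difference $\tfrac{1}{2}\left(\bar{P}^a(x)-\bar{P}^a(x-a)\right)$, with your invocation of the Recursive Difference Identity at $b=a$ being equivalent to the paper's direct use of the recursive decomposition $\bar{P}^{2a}(x)=\tfrac{1}{2}\bar{P}^a(x)+\tfrac{1}{2}\bar{P}^a(x-a)$, followed by the same differentiation of $\bar{P}^{a,a}$. Your explicit fundamental-theorem-of-calculus justification of that differentiation step is in fact slightly more careful than the paper's, which performs it without comment.
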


\begin{proof}
We begin by using the recursive decomposition:
\[
\bar{P}^{2a}(x) = \frac{1}{2} \bar{P}^{a}(x) + \frac{1}{2} \bar{P}^{a}(x - a).
\]
Therefore:
\[
\bar{P}^a(x) - \bar{P}^{2a}(x)
= \frac{1}{2} \left( \bar{P}^a(x) - \bar{P}^a(x - a) \right).
\]

Now observe:
\[
\bar{P}^{a,a}(x) = \frac{1}{a} \int_{x - a}^{x} \bar{P}^a(s) \, ds,
\]
and differentiating:
\[
\frac{d}{dx} \bar{P}^{a,a}(x)
= \frac{1}{a} \left( \bar{P}^a(x) - \bar{P}^a(x - a) \right).
\]

Thus:
\[
\bar{P}^a(x) - \bar{P}^{2a}(x) = \frac{a}{2} \frac{d}{dx} \bar{P}^{a,a}(x),
\]
as claimed.
\end{proof}

\begin{remark}
This theorem gives a precise analytical interpretation of MACD as the derivative of a smoothed signal — specifically, a doubly-averaged version of \( P(x) \). In contrast to raw derivatives, this formulation emphasizes stability and denoising, a key reason for MACD’s effectiveness in signal processing and trend detection.
\end{remark}

\begin{remark}[Regularity of Averaging Operators]
The identity
\[
\bar{P}^a(x) - \bar{P}^{2a}(x) = \frac{a}{2} \frac{d}{dx} \bar{P}^{a,a}(x)
\]
does not require \( P(x) \) to be differentiable. It holds for any locally integrable function \( P \in L^1_{\mathrm{loc}} \), since the moving average \( \bar{P}^a \) is defined as a convolution with a compactly supported, integrable kernel.

Thus, even when \( P(x) \) is merely continuous or piecewise smooth, the right-hand side derivative in the MACD identity is well-defined.
\end{remark}

\begin{theorem}[Regularity Gain of Averaging]
Let \( f \in C^n(\mathbb{R}) \). Then the right-endpoint averaging operator \( \bar{f}^a(x) \), defined as
\[
\bar{f}^a(x) := \frac{1}{a} \int_{x - a}^{x} f(s) \, ds,
\]
produces a function in \( C^{n+1}(\mathbb{R}) \). That is:
\[
f \in C^n \quad \Rightarrow \quad \bar{f}^a \in C^{n+1}.
\]
\end{theorem}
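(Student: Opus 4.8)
The plan is to express the averaged function via the fundamental theorem of calculus and exploit the fact that integration raises regularity by one order. Write $F(x) := \int_{-\infty}^{x} f(s)\,ds$ is awkward because $f$ need not be integrable on all of $\mathbb{R}$; instead I fix a basepoint $x_0$ and set $F(x) := \int_{x_0}^{x} f(s)\,ds$, so that
\[
\bar{f}^a(x) = \frac{1}{a}\bigl( F(x) - F(x - a) \bigr).
\]
Since $f \in C^n \subseteq C^0$, the function $F$ is continuously differentiable with $F' = f$, and more is true: because $F' = f \in C^n$, a standard bootstrap shows $F \in C^{n+1}$. The key point is that differentiating $F$ once yields $f$, which already carries $n$ continuous derivatives, so $F$ carries $n+1$.

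The main steps, in order, are as follows. First I would record the antiderivative identity $\bar{f}^a = \tfrac1a (F(x) - F(x-a))$ and observe that the map $x \mapsto F(x-a)$ is a translate of $F$, hence of the same regularity class. Second, I would prove the auxiliary fact that $F \in C^{n+1}(\mathbb{R})$: since $F$ is differentiable everywhere with $F' = f$ and $f \in C^n$, the composite statement ``$F$ is differentiable and its derivative lies in $C^n$'' is exactly the definition of $F \in C^{n+1}$. Third, I would conclude that $\bar{f}^a$, being a fixed linear combination ($\tfrac1a$ times a difference) of two functions in $C^{n+1}$, is itself in $C^{n+1}$, using that $C^{n+1}(\mathbb{R})$ is a vector space closed under translation.

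The cleanest way to make the regularity-gain explicit is to differentiate directly: for any $x$,
\[
\frac{d}{dx}\,\bar{f}^a(x) = \frac{1}{a}\bigl( f(x) - f(x - a) \bigr).
\]
This formula is the heart of the argument. The right-hand side is a difference of translates of $f$, hence belongs to $C^n$; therefore $\bar{f}^a$ has a first derivative lying in $C^n$, which means $\bar{f}^a \in C^{n+1}$. One can iterate this observation: the $k$-th derivative of $\bar{f}^a$ equals $\tfrac1a\bigl( f^{(k-1)}(x) - f^{(k-1)}(x-a) \bigr)$ for $1 \le k \le n+1$, and each such expression is continuous because $f^{(k-1)} \in C^{n-k+1} \subseteq C^0$. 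This displays all $n+1$ derivatives explicitly and confirms their continuity.

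The only genuine obstacle is justifying the differentiation-under-the-endpoint step, i.e.\ that $\tfrac{d}{dx}\int_{x-a}^{x} f(s)\,ds = f(x) - f(x-a)$; this is precisely the fundamental theorem of calculus applied to the two variable endpoints, valid since $f$ is continuous, so it is routine rather than deep. Everything else is bookkeeping about translation-invariance of $C^{n+1}$ and the elementary fact that differentiation lowers the regularity index by exactly one while integration raises it by one. I would present the explicit derivative formula as the key identity, since it simultaneously proves the gain and exhibits the derivatives, and note that the result is sharp: averaging a merely continuous $f$ ($n=0$) yields a $C^1$ function but generally no better, as the endpoint difference $f(x)-f(x-a)$ need not be differentiable.
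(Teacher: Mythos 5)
Your proof is correct and complete, and it takes a genuinely different route from the paper. The paper's own proof is a two-line appeal to convolution: it writes \( \bar{f}^a = f * K_a \) with \( K_a = \frac{1}{a}\chi_{[0,a]} \) and simply asserts that convolution with a compactly supported, bounded, piecewise continuous kernel raises regularity by one. That assertion is true for this particular kernel but is left unproved — and as a blanket principle it requires care, since convolution with a bounded compactly supported kernel in general only preserves (not improves) \( C^n \) regularity; the one-derivative gain here is special to the indicator kernel, whose distributional derivative is a difference of point masses. Your argument supplies exactly the content the paper omits: writing \( \bar{f}^a(x) = \frac{1}{a}\left( F(x) - F(x-a) \right) \) with \( F \) a fixed antiderivative, invoking the fundamental theorem of calculus to get \( \frac{d}{dx}\bar{f}^a(x) = \frac{1}{a}\left( f(x) - f(x-a) \right) \), and iterating to exhibit all \( n+1 \) derivatives explicitly as \( \frac{1}{a}\left( f^{(k-1)}(x) - f^{(k-1)}(x-a) \right) \), each continuous. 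What the paper's convolution viewpoint buys is conceptual economy and consistency with its operator-theoretic framing (the same kernel \( K_a \) appears in the \( L^p \) boundedness theorem); what your FTC computation buys is explicitness and rigor, including your correct closing observation that the gain is sharp — averaging a merely continuous \( f \) yields \( C^1 \) and generally nothing better, since \( f(x) - f(x-a) \) need not be differentiable.
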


\begin{proof}
The averaging operator is equivalent to convolution with the kernel:
\[
K_a(s) := \frac{1}{a} \chi_{[0, a]}(s),
\]
which is compactly supported, bounded, and piecewise continuous. Convolution with such a kernel raises regularity by one:
\[
\bar{f}^a(x) = (f * K_a)(x) \in C^{n+1}.
\]
\end{proof}

\section{MACD as Derivative of Double Phase-Corrected Average}

We refine our understanding of MACD by introducing \textbf{phase-corrected averaging}, and showing that the MACD operator is precisely a \textbf{delayed derivative of a double phase-corrected average}.

\begin{definition}[Phase-Corrected Averaging]
Given a locally integrable function \( P(x) \), we define the \textbf{phase-corrected moving average} of window length \( a > 0 \) as:
\[
\tilde{f}^a(x) := (C_a f)(x) := \frac{1}{a} \int_{x - \frac{a}{2}}^{x + \frac{a}{2}} f(s) \, ds = \Bar{f}^a\left(x+\frac{a}{2}\right).
\]
This represents a symmetric (centered) average around \( x \), unlike the causal right-endpoint average \( \bar{f}^a(x) \).
\end{definition}

\begin{definition}[Double Phase-Corrected Averaging]
We define the \textbf{double phase-corrected average} as the phase-corrected average of the phase-corrected signal:
\[
\tilde{P}^{a,a}(x) := \dcavg{P}{a}(x) := \frac{1}{a} \int_{x - \frac{a}{2}}^{x + \frac{a}{2}} \tilde{P}^a(s) \, ds.
\]
Equivalently, \( \tilde{P}^{a,a}(x) =C_a\left[ C_a P \right](x) \), i.e., a composition of centered smoothing operators.
\end{definition}

\begin{corollary}[MACD as Delayed Derivative of Double Phase-Corrected Average]
Let \( P(x) \) be a locally integrable function. Then the MACD difference between short- and long-term averages satisfies the identity:
\[
\bar{P}^a(x) - \bar{P}^{2a}(x)
= \frac{a}{2} \cdot \frac{d}{dx} \tilde{P}^{a,a}(x - a),
\]
where \( \tilde{P}^{a,a}(x) \) is the double phase-corrected average.
\end{corollary}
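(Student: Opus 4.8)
The plan is to reduce this corollary to the already-established Theorem (MACD as Derivative of Double Average), which asserts that $\bar{P}^a(x) - \bar{P}^{2a}(x) = \frac{a}{2}\frac{d}{dx}\bar{P}^{a,a}(x)$. Since that identity and the claimed one share the same left-hand side, it suffices to prove the purely kinematic relation
\[
\tilde{P}^{a,a}(x - a) = \bar{P}^{a,a}(x),
\]
after which differentiating both sides in $x$ and invoking the earlier theorem finishes the argument. In other words, the whole corollary comes down to showing that the double phase-corrected average is simply the double right-endpoint average advanced by one window width $a$.

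To establish this relation, first I would record the defining shift between the two averaging families: from the definition of phase-corrected averaging, $\tilde{f}^a(x) = \bar{f}^a(x + a/2)$. Writing $S_h$ for the translation $(S_h g)(x) := g(x + h)$, this reads $C_a = S_{a/2} R_a$. Because $R_a$ is a convolution operator (convolution with $K_a$), it commutes with every translation, so the composition defining the double phase-corrected average factors cleanly:
\[
\dcavg{P}{a} = C_a C_a P = S_{a/2} R_a S_{a/2} R_a P = S_a R_a R_a P = S_a\,\bar{P}^{a,a}.
\]
In coordinates this is exactly $\tilde{P}^{a,a}(x) = \bar{P}^{a,a}(x + a)$, equivalently $\tilde{P}^{a,a}(x - a) = \bar{P}^{a,a}(x)$. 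As an independent check that avoids the operator formalism, I would expand $\tilde{P}^{a,a}(x) = \frac{1}{a}\int_{x-a/2}^{x+a/2}\tilde{P}^a(s)\,ds$, substitute $\tilde{P}^a(s) = \bar{P}^a(s + a/2)$, and let $u = s + a/2$; the integral becomes $\frac{1}{a}\int_{x}^{x+a}\bar{P}^a(u)\,du = \bar{P}^{a,a}(x + a)$, recovering the same relation.

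Finally, differentiating $\tilde{P}^{a,a}(x-a) = \bar{P}^{a,a}(x)$ in $x$ yields $\frac{d}{dx}\tilde{P}^{a,a}(x - a) = \frac{d}{dx}\bar{P}^{a,a}(x)$, so multiplying by $a/2$ and applying the earlier theorem produces the claimed identity. I expect the only genuine obstacle to be the careful bookkeeping of the two half-window shifts: one must confirm that each centered average contributes a forward shift of exactly $a/2$ and, crucially, that these shifts commute past the averaging operator so that they accumulate to a single shift of $a$ rather than entangling with the smoothing. Since $R_a$ is translation-invariant this commutation is automatic, but it is the one place where a sign or magnitude slip in the phase would silently break the result, so I would verify the direction of the shift explicitly against the definition $\tilde{f}^a(x) = \bar{f}^a(x + a/2)$.
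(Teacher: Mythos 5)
Your proof is correct and takes essentially the same route as the paper: both reduce the corollary to the earlier theorem \( \bar{P}^a(x) - \bar{P}^{2a}(x) = \frac{a}{2}\frac{d}{dx}\bar{P}^{a,a}(x) \) and then identify \( \bar{P}^{a,a}(x) = \tilde{P}^{a,a}(x-a) \) via the phase relation \( \tilde{f}^a(x) = \bar{f}^a(x + a/2) \) combined with translation invariance of the averaging operators. The only difference is presentational — you make the translation-commutation step explicit with the operator \( S_h \) and add an integral cross-check, whereas the paper performs the same substitutions inline.
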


\begin{proof}
From earlier results, we know that:
\begin{align*}
    \bar{P}^a(x) - \bar{P}^{2a}(x) &= \frac{a}{2} \frac{d}{dx} \bar{P}^{a,a}(x)\\
    &=\frac{a}{2} \frac{d}{dx} C_a \bar{P}^{a}\left(x-\frac{a}{2}\right)\\
    &=\frac{a}{2} \frac{d}{dx} C_a^2 P\left(x-a\right)\\
    &=\frac{a}{2}\frac{d}{dx} \tilde P^{a,a}\left(x-a\right)
\end{align*}
\end{proof}

\begin{remark}
This result reveals the exact structure underlying MACD: it is a **derivative of a doubly smoothed, phase-centered signal**, delayed to account for the implicit lag introduced by successive right-endpoint averages. This perspective eliminates heuristic lag and asymmetry, and shows MACD as a well-defined, linear differential operator on smoothed signals.
\end{remark}

\section{Recursive Decomposition and Delay Expansion}

We now develop a recursive expression for the MACD-like difference between a short-term and a long-term average when the long-term window is an integer multiple of the short one. Our derivation of MACD as a weighted sum of smoothed derivatives relates conceptually to triangular convolution kernels in harmonic analysis \cite{folland1999, rudin1991}.

\subsection{7.1 Difference Identity for Nested Averages}

\begin{theorem}[Recursive Derivative Expansion of Averaging Difference]
Let \( b > 0 \), and let \( n \in \mathbb{N} \). Define \( a := nb \). Then the difference between short- and long-term moving averages satisfies the identity:
\[
\bar{P}^{a}(x) - \bar{P}^{a+b}(x)
= \frac{b}{2} \frac{d}{dx} \sum_{i=1}^{n} \frac{2i}{n(n+1)} \dcavg{P}{b}(x - ib),
\]
where each \( \bar{P}^b \) is the average over window \( b \) evaluated at a delayed time.
\end{theorem}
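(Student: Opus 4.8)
The plan is to prove the identity by induction on $n$, using the already-established base case and the recursive difference identity as the engine that advances the induction. For $n=1$ we have $a = b$, and the claimed sum collapses to a single term with coefficient $\frac{2\cdot 1}{1\cdot 2} = 1$, giving exactly $\bar{P}^b(x) - \bar{P}^{2b}(x) = \frac{b}{2}\frac{d}{dx}\dcavg{P}{b}(x-b)$, which is precisely the Corollary (MACD as Delayed Derivative of Double Phase-Corrected Average) with $a$ replaced by $b$. So the base case is free.

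For the inductive step, I would first telescope. The key observation is that the ``MACD-like'' difference $\bar{P}^{a}(x) - \bar{P}^{a+b}(x)$ over nested windows should decompose into a sum of elementary differences of the form $\bar{P}^{kb}(x) - \bar{P}^{(k+1)b}(x)$, each of which is handled by the base identity. First I would write, using the recursive decomposition (Theorem 1) with $T = (k+1)b$, $t_1 = kb$, $t_2 = b$, the relation $\bar{P}^{(k+1)b}(x) = \frac{k}{k+1}\bar{P}^{kb}(x) + \frac{1}{k+1}\bar{P}^{b}(x-kb)$, so that the elementary difference becomes $\bar{P}^{kb}(x) - \bar{P}^{(k+1)b}(x) = \frac{1}{k+1}\bigl(\bar{P}^{kb}(x) - \bar{P}^{b}(x-kb)\bigr)$. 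Then I would seek a derivative representation of each such term as $\frac{b}{2}\frac{d}{dx}\dcavg{P}{b}(x - \cdot)$, aiming to match the target weights $\frac{2i}{n(n+1)}$ after summation.

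The natural route is to express each single-step difference via the centered double-average operator $\dcavg{P}{b}$ and then sum the telescoping chain $\bar{P}^{a}(x) - \bar{P}^{a+b}(x) = \bigl[\bar{P}^{a}(x) - \bar{P}^{(n+1)b}(x)\bigr]$ directly, rather than summing $n$ separate MACD differences. I would use the base Corollary's mechanism: since $\bar{P}^a(x) - \bar{P}^{2a}(x) = \frac{a}{2}\frac{d}{dx}\dcavg{P}{a}(x-a)$ relies on the split $\bar{P}^{2a} = \frac12\bar{P}^a(x) + \frac12\bar{P}^a(x-a)$, the generalization to unequal windows must instead use the asymmetric split above. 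Concretely I expect each increment $\bar{P}^{kb}(x) - \bar{P}^{(k+1)b}(x)$ to contribute a centered double-average $\dcavg{P}{b}$ evaluated at a delay proportional to $k$, and the arithmetic weight $\frac{1}{k+1}$ to combine across $k = 1, \dots, n$ into the triangular weights $\frac{2i}{n(n+1)}$ via the identity $\sum \frac{1}{k+1}\cdot(\text{something linear}) = \frac{2i}{n(n+1)}$, recognizing $\sum_{i=1}^n i = \frac{n(n+1)}{2}$ as the normalizing denominator.

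The main obstacle I anticipate is the bookkeeping of delays and the reconciliation of two different phase conventions: the base identity produces the \emph{double phase-corrected} average $\dcavg{P}{b}$ (centered, via $C_b^2$) with a specific delay of $b$, while the telescoped increments naturally produce right-endpoint quantities $\bar{P}^{b}(x - kb)$ at varying delays $kb$. Converting the latter into the centered $\dcavg{P}{b}(x - ib)$ form, and verifying that the accumulated delays align to produce exactly the index $ib$ inside $\dcavg{P}{b}(x-ib)$ with coefficient $\frac{2i}{n(n+1)}$, is the delicate step. I would therefore carry out the induction carefully, tracking at each stage how adding the window from $a = nb$ to $a+b = (n+1)b$ shifts all existing delay terms and introduces one new term, and confirm that the reindexed weighted sum $\sum_{i=1}^{n+1}\frac{2i}{(n+1)(n+2)}\dcavg{P}{b}(x-ib)$ emerges after combining the old sum (reweighted from denominator $n(n+1)$ to $(n+1)(n+2)$) with the new elementary difference supplied by Theorem 1 and the base Corollary.
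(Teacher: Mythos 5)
Your proposal is correct and takes essentially the same route as the paper: the paper's proof unrolls the identical recursion downward with an informal $\hdots$ — using the asymmetric split $\bar{P}^{(k+1)b}(x)=\frac{k}{k+1}\bar{P}^{kb}(x)+\frac{1}{k+1}\bar{P}^{b}(x-kb)$ together with the single-step identity $\bar{P}^{b}(y)-\bar{P}^{b}(y-b)=b\,\frac{d}{dx}\dcavg{P}{b}(y-b)$ — while you run the same algebra upward as a formal induction, which if anything is a tighter packaging of the paper's ellipsis. The one correction to your anticipated bookkeeping is that in the inductive step the existing delays $ib$ do \emph{not} shift: the old sum merely rescales (denominator $n(n+1)\to(n+1)(n+2)$, i.e.\ a factor of $\frac{n}{n+2}$) and a single new term appears at delay $(n+1)b$ with weight $\frac{2(n+1)}{(n+1)(n+2)}$, exactly as the target sum requires.
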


\begin{proof}\begin{align*}
    \Bar{P}^a(x) - \Bar{P}^{a+b}(x) =& \frac{b}{(n+1)b} \left(\Bar{P}^{nb}(x) - \Bar{P}^{b}(x - nb)\right) \\
    =& \frac{1}{n+1} \left( \frac{n-1}{n} \Bar{P}^{(n-1)b}(x) + \frac{1}{n} \Bar{P}^{b}(x - (n-1)b) - \Bar{P}^{b}(x - nb) \right)\\
    =& \frac{1}{n+1} \left( \frac{n-1}{n} \Bar{P}^{(n-1)b}(x) - \frac{n-1}{n} \Bar{P}^{b}(x - (n-1)b) + \Bar{P}^{b}(x - (n-1)b) - \Bar{P}^{b}(x - nb) \right)\\
     =& \frac{1}{(n+1)} \cdot \frac{n-1}{n} \left( \Bar{P}^{(n-1)b}(x) - \Bar{P}^b(x - (n-1)b) \right) + \frac{b}{(n+1)} \frac{d}{dx} \davg{P}{b}(x - nb + b) \\
    =& \frac{n-1}{(n+1)n} \left( \Bar{P}^{(n-1)b}(x) - \Bar{P}^b(x - (n-1)b) \right) + \frac{b}{(n+1)} \frac{d}{dx} \davg{P}{b}(x - (n-1)b) \\
    =& \frac{1}{(n+1)}\left[b \frac{d}{dx} \davg{P}{b}(x - (n-1)b)+\frac{n-1}{n} \left( \Bar{P}^{(n-1)b}(x) - \Bar{P}^b(x - (n-1)b) \right)\right]\\
    =&\frac{1}{(n+1)}\left[b \frac{d}{dx} \davg{P}{b}(x - (n-1)b)\right.\\
    &+\frac{n-1}{n}\left(b \frac{d}{dx} \davg{P}{b}(x - (n-2)b)+\frac{n-2}{n-1}\left(  \Bar{P}^{(n-1)b}(x) 
     - \Bar{P}^b(x - (n-1)b) \right)\right ]\\
    =&\hdots \\
    =& \frac{b}{(n+1)n}\frac{d}{dx} \davg{P}{b}(x)+\frac{2b}{(n+1)n}\frac{d}{dx} \davg{P}{b}(x-b)+\hdots+\frac{nb}{(n+1)n}\frac{d}{dx} \davg{P}{b}(x-(n-1)b)\\
    =&\sum_{i=1}^n\frac{ib}{(n+1)n}\frac{d}{dx}\davg{P}{b}(x-(i-1)b)\\
\end{align*}

Hence,
\begin{align*}
    \Bar{P}^a(x) - \Bar{P}^{a+b}(x) =\frac{b}{2}\frac{d}{dx}&\sum_{i=1}^n\frac{2i}{(n+1)n}\davg{P}{b}(x-(i-1)b)\\
    = \frac{b}{2}\frac{d}{dx}&\sum_{i=1}^n\frac{2i}{(n+1)n}\dcavg{P}{b}(x-ib)\\
\end{align*}

\end{proof}

\subsection{Interpretation and Application}

This final identity reveals that MACD, when generalized over recursive average windows, becomes a \textbf{differentiated convolution sum} over time-delayed smoothed signals. The weights \( \frac{i}{n} \) represent a \textbf{linearly increasing emphasis on more recent history}, while the derivative extracts the overall directional trend.

This aligns perfectly with the intuitive and empirical function of MACD: it is a \textbf{band-pass operator}, sensitive to changes in trend over intermediate time scales, and suppressing both short-term noise and long-term drift.

\begin{remark}
The form:
\[
\frac{b}{2} \frac{d}{dx} \sum_{i=1}^{n} \frac{2i}{n(n+1)} \dcavg{P}{b}(x - ib)\]
is a concrete realization of MACD as a differentiation of convolution with a \textbf{triangular kernel}.
\end{remark}

\section*{Conclusion}

In this paper, we have reformulated the MACD indicator as a rigorously defined linear operator rooted in classical analysis. Through recursive decomposition, phase correction, and double averaging, we demonstrated that MACD can be viewed as a delayed derivative applied to a doubly smoothed signal — a structure that reflects both stability and responsiveness.

We unify empirical technical analysis tools with rigorous operator theory \cite{murphy1999, oppenheim2009}, offering an analytical foundation for future research into adaptive filters and financial signal processing. This framework not only justifies empirical practices but also opens a path to principled generalizations of momentum-based indicators and signal-processing strategies in financial and dynamical systems.

Future work may explore Fourier-domain representations, optimal weighting schemes, and extensions to multi-dimensional signals and adaptive filtering.

\bibliographystyle{plain}
\bibliography{refs}

\end{document}